  \newlength{\du}
\newtheorem{theorem}{Theorem}
\newtheorem{lemma}{Lemma}
\begin{document}
\title{Fundamental Limits of Coded Caching:\\ The Memory Rate Pair $\Big(K-1-\frac{1}{K},\frac{1}{(K-1)}\Big)$}
\author{\IEEEauthorblockN{Vijith~Kumar~K~P}
\IEEEauthorblockA{Department~of~EEE \\
IIT Guwahati, INDIA \\ 
Email: vijith@iitg.ac.in
} 
\and
\IEEEauthorblockN{Brijesh~Kumar~Rai}
\IEEEauthorblockA{Department~of~EEE \\
IIT Guwahati, INDIA \\ 
Email: bkrai@iitg.ac.in}
\and
\IEEEauthorblockN{Tony~Jacob}
\IEEEauthorblockA{Department~of~EEE \\
IIT Guwahati, INDIA\\
Email: tonyj@iitg.ac.in}
}
\maketitle
\begin{abstract}
Maddah-Ali and Niesen, in a seminal paper, introduced the notion of coded caching. The exact nature of the fundamental limits in this context has remained elusive even as several approximate characterizations have been found. A new optimal scheme for the $(3,3)$ cache network, operating at the memory rate pair $(5/3,1/2)$ for the demand where all the users request for distinct files, was introduced recently to partially  address this issue. In this paper, an extension of this scheme to the general $(K,K)$ cache network, operating at the memory rate pair $((K^2-K-1)/K,1/(K-1)$, is proposed. A new lower bound is also derived which demonstrates the optimality of the proposed scheme for the demand where all the users request for distinct files.

\end{abstract}
\begin{IEEEkeywords}
Coded caching, coded pre-fetching, exact rate memory tradeoff. fundamental limits.
\end{IEEEkeywords}
\section{Introduction}
The idea of coding in a cache network was introduced by Maddah-Ali and Niesen, in \cite{maddah2014fundamental}. They considered the $(N,K)$ canonical cache network, as shown in Fig. \ref{fig:canonical}. In this network, a server has $N$ files $\{W_{1},\dots,W_{N}\}$ which are of interest  to the $K$ users connected to it through a common shared link. The size of each file is $F$ bits and each user $U_{k}$ has an isolated cache $Z_{k}$ of size $MF$ bits, where $M\in[0,N]$. During the placement phase, the server fills the cache of each users with information regarding the $N$ files. In the delivery phase, each user communicates its request to the server. The users requests can be represented as a vector $\textbf{d}=\{W_{d_{1}},\dots, W_{d_{K}}\}$, where $W_{d_{k}}$ represents the file requested by user $k$. A set of packets $X_{\textbf{d}}$, consisting of $RF$ bits, is broadcast by the server over the common shared link to all users in response to the demand $\textbf{d}$. From the received packets $X_{\textbf{d}}$ and its cache contents, each user obtains its requested file. A memory rate pair $(M,R)$ is said to be achievable, if there exists a scheme that operates using $MF$ bits in each cache and $RF$ bits in the broadcast packets to satisfy each user's demand.

\begin{figure}[h]
\centering
\input{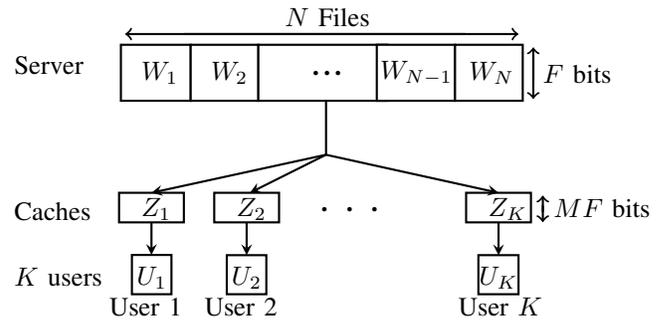}
\captionsetup{justification=centering}

\caption{The canonical cache network considered in \cite{maddah2014fundamental}.}
\label{fig:canonical}
\end{figure}

\begin{table*}[t]
\centering
\begin{tabular}{|c|c|c|c|c|}
\hline &&&&\\[-1em] Caching Scheme & Placement Phase & Cache Size, $M$ & Rate, $R$ & Matching Lower Bound  \\ 
\hline &&&&\\[-0.8em] Chen et al. \cite{chen2014fundamental}  & Coded pre-fetching  & $\dfrac{1}{K}$  & $N\bigg(1-\dfrac{1}{K}\bigg)$ & Cut set bound  \\[0.2em]
\hline &&&&\\[-0.8em] Amiri and Gunduz \cite{amiri2016fundamental}  & Coded pre-fetching  & $\dfrac{N-1}{K}$  & $N\bigg(1-\dfrac{N}{2K}\bigg)$ & No matching lower bound  \\[0.15em]
\hline &&&&\\[-0.7em] G{\'o}mez-Vilardeb{\'o} \cite{gomez2018fundamental} & Coded pre-fetching  & $\dfrac{N}{Kq}$, where $q\in\{1,2,\dots,N\}$  & $N\bigg(1-\dfrac{N+1}{K(q+1)}\bigg)$ &\\[-2.5em]&&&& Matching lower bound\\&&&& for $K=N$ and $M=1/(N-1)$ \\[0.8em]
\hline &&&&\\[-0.8em] Tian and Chen  \cite{tian2016caching}& Coded pre-fetching  & $\dfrac{t[(N-1)t+K-N]}{K(K-1)}$,&& \\&&where $t\in\{0,1,\dots,N\}$ &&\\[-2.3em]&&&$N\bigg(1-\dfrac{t}{K}\bigg)$ & No matching lower bound  \\[.8em]
\hline &&&&\\[-0.8em] Maddah-Ali and Niesen \cite{maddah2014fundamental} & Uncoded pre-fetching  & $\dfrac{Nr}{K}$, where $r\in\{0,1,\dots,K\}$  & $\bigg(\dfrac{K-r}{1+r}\bigg)$ &\\[-2.5em]&&&& Matching lower bound for\\ &&&&uncoded pre-fetching\\&&&& when $N\geq K$ \cite{wan2016optimality}  \\ 
\hline &&&&\\[-0.9em] Yu et al. \cite{yu2016exact} & Uncoded pre-fetching  & $\dfrac{Nr}{K}$, where $r\in\{0,1,\dots,K\}$  & $\dfrac{\binom{K}{r+1}-\binom{K-N}{r+1}}{\binom{K}{r}}$ &\\[-2.5em] &&&& Matching lower bound \\ [0.1em]&&&&for uncoded pre-fetching  \\ [0.15em]
\hline 
\end{tabular}
\caption{Previous works in coded caching.}
\label{Previous works}
\end{table*}

In this paper we consider the demands, where all the users request for distinct files. A coded caching scheme  for the $(N,K)$ cache network was proposed in \cite{maddah2014fundamental} for this demand. Lower bounds based on cutset arguments were also presented and proved to be within a multiplicative gap of 12 from the proposed scheme. Several improvements to these lower bounds were found in \cite{ ghasemi2017improved, sengupta2015improved, yu2017characterizing,wang2018improved}, with the current best multiplicative gap of 2, in \cite{yu2017characterizing}. Surprisingly, a modification of the scheme presented in \cite{maddah2014fundamental} was proved to be optimal when coding is permitted only in the delivery phase and the placement phase is restricted to be uncoded \cite{wan2016optimality,yu2016exact}. In the general case of coding being permitted in both  placement phase as well as delivery phase, several schemes were proposed \cite{chen2014fundamental,amiri2016fundamental,gomez2018fundamental,tian2016caching} to improve the performance in \cite{maddah2014fundamental}. These results are summarised in TABLE I. The exact rate memory tradeoff in this case has remained elusive except for the $(N,2)$ cache network characterized in \cite{tian2016symmetry}. As an improvement to this situation, in this paper, we present a new coded caching scheme operating at the memory rate pair $(\frac{K^2-K-1}{K},\frac{1}{K-1})$ which is shown to be optimal.

The rest of this paper is organized as follows. In section II, we consider the $(3,3)$ cache network as an example to explain the idea behind the proposed coded caching scheme. The new caching scheme is proposed in section III. In section IV, we demonstrate the optimality of the proposed scheme and conclude the paper in section V.

\section{The $(3,3)$ Cache Network} 
In this section we briefly review previous results that are relevant to the presentation that follows. In \cite{vijith}, the authors presented a new coding scheme for the $(3,3)$ cache network for the memory rate pair $(5/3,1/2)$. An extension to the $(4,4)$ cache network for the memory rate pair $(11/4,1/3)$ was also presented. For clarity, we present the $(3,3)$ cache network example, in detail, with updated notations. In this network the server has three files $\{W_{1},W_{2},W_{3}\}$, each of size $F$ bits, and three users are connected to the server through a common shared link.

In this scheme, the placement phase can be seen as executed in two stages. In the first stage, each file is split into 6 disjoint subfiles of size $F/6$ bits. Let the subfiles of the file  $W_{n}$ be represented as: 
\begin{IEEEeqnarray*}{rll}
\{W_{n,\{1,2\}}^{1},W_{n,\{1,2\}}^{2},W_{n,\{1,3\}}^{1},W_{n,\{1,3\}}^{3},W_{n,\{2,3\}}^{2},W_{n,\{2,3\}}^{3}\} 
\end{IEEEeqnarray*}
A collection of these subfiles are placed in the cache of each user as shown in TABLE II. In stage two, functions computed from these subfiles, as shown in TABLE II, are also placed in each user's cache. 

\begin{table*}[ht]
\hspace{0.05in}
\centering
\begin{tabular}{|c|c|c|c|c|c|c|}
\hline \multirow{7}{*}{Stage 1}&&\\[-1em] & User 1, $Z_{1}$&User 2, $Z_{2}$&User 3, $Z_{3}$ \\
\hhline{~---} & $W_{1,\{2,3\}}^{2}$ &$W_{1,\{1,3\}}^{1}$ &$W_{1,\{1,2\}}^{1}$\\
\hhline{~---}&  $W_{1,\{2,3\}}^{3}$ & $W_{1,\{1,3\}}^{3}$& $W_{1,\{1,2\}}^{2}$ \\
\hhline{~---}&  $W_{2,\{2,3\}}^{2}$ & $W_{2,\{1,3\}}^{1}$& $W_{2,\{1,2\}}^{1}$ \\
\hhline{~---} &  $W_{2,\{2,3\}}^{3}$ & $W_{2,\{1,3\}}^{3}$ & $W_{2,\{1,2\}}^{2}$\\
\hhline{~---} &  $W_{3,\{2,3\}}^{2}$ & $W_{3,\{1,3\}}^{1}$ & $W_{3,\{1,2\}}^{1}$ \\
\hhline{~---} &  $W_{3,\{2,3\}}^{3}$ & $W_{3,\{1,3\}}^{3}$ & $W_{3,\{1,2\}}^{2}$\\
\hhline{----} \multirow{4}{*}{Stage 2}& $W_{1,\{1,3\}}^{1}+ W_{1,\{1,2\}}^{1}$ & $W_{1,\{1,2\}}^{2}+ W_{1,\{2,3\}}^{2}$& $W_{1,\{2,3\}}^{3}+ W_{1,\{1,3\}}^{3}$\\
\hhline{~---} & $W_{2,\{1,3\}}^{1}+ W_{2,\{1,2\}}^{1}$ & $W_{2,\{1,2\}}^{2}+ W_{2,\{2,3\}}^{2}$ & $W_{2,\{2,3\}}^{3}+ W_{2,\{1,3\}}^{3}$\\
\hhline{~---} & $W_{3,\{1,3\}}^{1}+ W_{3,\{1,2\}}^{1} $& $W_{3,\{1,2\}}^{2}+ W_{3,\{2,3\}}^{2}$ & $W_{3,\{2,3\}}^{3}+ W_{3,\{1,3\}}^{3}$\\
\hhline{~---} & $W_{1,\{1,3\}}^{1}+ W_{2,\{1,3\}}^{1}+ W_{3,\{1,3\}}^{1}$ &$W_{1,\{1,2\}}^{2}+ W_{2,\{1,2\}}^{2}+ W_{3,\{1,2\}}^{2}$ &$W_{1,\{2,3\}}^{3}+ W_{2,\{2,3\}}^{3}+ W_{3,\{2,3\}}^{3}$\\
\hhline{----}
\end{tabular}
\captionsetup{justification=centering}
\caption{Cache contents for the $(3,3)$ cache network.}
\label{table:cache content}
\vspace{-2mm}
\end{table*}

In the delivery phase, the server computes the packets to be broadcast based on the users demand. Let us consider a demand $\textbf{d}=\{W_{d_{1}},W_{d_{2}},W_{d_{3}}\}$, where $W_{d_{1}}$ represents the file requested by the user $1$, $W_{d_{2}}$ represents the file requested by the user $2$ and $W_{d_{3}}$ represents the file requested by the user $3$, which all are assumed to be distinct. In response to this demand the server broadcasts a set of packets,
\begin{IEEEeqnarray*}{rll}
X_{\textbf{d}}=\left\{\begin{array}{c}
W_{d_{2},\{1,2\}}^{1}+W_{d_{3},\{1,3\}}^{1},\\ W_{d_{3},\{2,3\}}^{2}+W_{d_{1},\{1,2\}}^{2},\\W_{d_{1},\{1,3\}}^{3}+W_{d_{2},\{2,3\}}^{3}.
\end{array}
 \right\}
\end{IEEEeqnarray*}

Let us consider user 1 to understand how the scheme works. Based on its cache contents and broadcast packets, user 1 needs to compute all subfiles of $W_{d_{1}}$, namely,
\begin{IEEEeqnarray*}{c}
W_{d_{1},\{1,2\}}^{1},W_{d_{1},\{1,2\}}^{2},\\W_{d_{1},\{1,3\}}^{1},W_{d_{1},\{1,3\}}^{3},\\W_{d_{1},\{2,3\}}^{2},W_{d_{1},\{2,3\}}^{3}.
\end{IEEEeqnarray*}
The subfiles $W_{d_{1},\{2,3\}}^{2}$ and $W_{d_{1},\{2,3\}}^{3}$ are already present in its cache. It also has subfiles,
\begin{IEEEeqnarray*}{c}
W_{d_{2},\{2,3\}}^{3}\text{ and } W_{d_{3},\{2,3\}}^{2},
\end{IEEEeqnarray*}
which along with the broadcast packets,
\begin{IEEEeqnarray*}{c}
W_{d_{1},\{1,3\}}^{3}+W_{d_{2},\{2,3\}}^{3} \text{ and } W_{d_{1},\{1,2\}}^{2}+W_{d_{3},\{2,3\}}^{2}
\end{IEEEeqnarray*}
enables the user 1 to compute the subfiles $W_{d_{1},\{1,3\}}^{3}$ and $W_{d_{1},\{1,2\}}^{2}$. The cache content $W_{d_{2},\{1,3\}}^{1}+W_{d_{2},\{1,2\}}^{1}$ along with the broadcast packet $W_{d_{2},\{1,2\}}^{1}+W_{d_{3},\{1,3\}}^{1}$ can be used to compute the function,
\begin{equation}
\label{evaluate3,3}
W_{d_{2},\{1,3\}}^{1}+W_{d_{3},\{1,3\}}^{1},
\end{equation}
This function along with the cached content,
\begin{equation*}
W_{d_{1},\{1,3\}}^{1}+W_{d_{2},\{1,3\}}^{1}+W_{d_{3},\{1,3\}}^{1},
\end{equation*}
can be used to recover the subfile $W_{d_{1},\{1,3\}}$. This subfile along with the cached content,
\begin{equation*}
W_{d_{1},\{1,3\}}^{1}+W_{d_{1},\{1,2\}}^{1},
\end{equation*}
can recover the subfile $W_{d_{1},\{1,2\}}^{1}$. Thus all subfiles of the file $W_{d_{1}}$ can be successfully computed by user 1. 

Since all subfiles and computed functions stored in each users cache are of size $F/6$ bits and there are ten of them, the size of each cache is $(10/6)F=(5/3)F$ bits. In response to every possible demand, the server broadcasts three packets of size $F/6$ bits each and thus transmits $F/2$ bits on the common shared link. The proposed scheme can thus be seen to operate successfully at the memory rate pair $(5/3,1/2)$.
\begin{figure}[!t]
\centering
\begin{tikzpicture}[line cap=round,line join=round,x=2.6cm,y=1.7cm,
    spy/.style={%
        draw,green,
        line width=1pt,
        rectangle,inner sep=0pt,
    },
]

    \def\spyviewersize{3cm}

    \def\spyonclipreduce{.3pt}

    \def\spyfactorI{10}
    \coordinate (spy-on 1) at (5/3,0.52);
    \coordinate (spy-in 1) at (2.3,1.5);

    \def\pic{\coordinate (O) at (0,0);
       \draw [ultra thin,step=3,black] (0,0) grid (3,3);
%
   \foreach \x in {0,1,2,3}
   \draw[shift={(\x,0)},color=black,thin] (0pt,1pt) -- (0pt,-1pt)
                                   node[below] {\footnotesize $\x$};
      \foreach \y in {0,1,2,3}
      \draw[shift={(0,\y)},color=black,thin] (1pt,0pt) -- (-1pt,0pt)
                                    node[left] {\footnotesize $\y$};
  \draw[color=black] (5cm,-18pt) node[left] { Cache size $M$};
  \draw[color=black] (-0.5cm,3.5cm) node[left,rotate=90] { Rate $R$};
  \draw[smooth,gray!80!white,samples=1000,domain=0.0:2.2,mark=otimes,line width=2pt] 
  
   {(0,3)--(1/3,2)node[mark size=3.4pt,line width=0.1pt]{$\pgfuseplotmark{*}$}--(2/3,4/3)node [mark size=3pt,line width=0.1pt,label={left:$(2/3,4/3)$}]{$\pgfuseplotmark{*}$}--(1,1)--(7/6,5/6)node [mark size=3pt,line width=0.1pt,label={left:$(7/6,5/6)$}]{$\pgfuseplotmark{*}$}--(5/3,1/2)node [mark size=3.4pt,line width=0.1pt]{$\pgfuseplotmark{*}$}--(2,1/3)node [mark size=3.4pt,line width=0.1pt]{$\pgfuseplotmark{*}$}--(3,0)}; 
   \draw[smooth,blue!70,mark=otimes,samples=1000,domain=0.0:2.2,mark = $\otimes$,dash pattern=on 10pt off 10pt,line width=2pt]
      {(0,3)--(1/3,2)  node[mark size=2.5pt,line width=0.1pt,label={right:$(1/3,2)$}]{$\pgfuseplotmark{square*}$}--(1/2,5/3)  node[mark size=2.5pt,line width=0.1pt,label={right:$(1/2,5/3)$}]{$\pgfuseplotmark{square*}$}--(1,1) node[mark size=2.5pt,line width=0.1pt,label={right:$(1,1)$}]{$\pgfuseplotmark{square*}$}--(2,1/3) node[mark size=2.5pt,line width=0.1pt,label={right:$(2,1/3)$}]{$\pgfuseplotmark{square*}$}--(3,0)};
      \draw[,smooth,red!60!black,samples=1000,domain=0.0:2.2,mark size=2pt,mark =otimes*,dash pattern=on 6pt off 6pt,line width=2pt]
            {(0,3)--(1/3,2)--(1/2,5/3)--(1,1)--(5/3,1/2) node[mark size=4pt,line width=0.1pt,label={below:$(5/3,1/2)$}]{$\pgfuseplotmark{triangle*}$}--(2,1/3)--(3,0)};
    }

    \pic

    \node[spy,minimum size={2*\spyviewersize/\spyfactorI}] (spy-on node 1) at (spy-on 1) {};
    \node[spy,minimum size=\spyviewersize] (spy-in node 1) at (spy-in 1) {};
    \begin{scope}
        \clip (spy-in 1) circle (0.5*\spyviewersize);
        \pgfmathsetmacro\sI{1/\spyfactorI}
        \begin{scope}[
            shift={($\sI*(spy-in 1)-\sI*(spy-on 1)$)},
            scale around={\spyfactorI:(spy-on 1)}
        ]
            \pic
        \end{scope}
    \end{scope}
    \draw [spy] (spy-on node 1) -- (spy-in node 1);
\draw [gray!50!white,line width=1pt,fill=white] (0.7,2.3)rectangle (3,3.1);
\begin{scope}[shift={(0.6,2.4)}] 

	\draw [smooth,samples=1000,domain=0.0:2.2,red!60!black,mark=otimes,dash pattern=on 6pt off 6pt,line width=2pt] 
		{(0.25,0) --node [mark size=4pt,line width=0.1pt]{$\pgfuseplotmark{triangle*}$} (0.5,0)}
		node[right]{New achievable rate };
		\draw [yshift=\baselineskip,smooth,blue!70,samples=1000,domain=0.0:2.2,mark=otimes,dash pattern=on 10pt off 10pt,line width=2pt] 
				{(0.25,0) --node [mark size=2.5pt,line width=0.1pt]{$\pgfuseplotmark{square*}$} (0.5,0)}
				node[right]{Known achievable rate };
						\draw [yshift=2\baselineskip,smooth,gray!80!white,samples=1000,domain=0.0:2.2,mark=otimes,line width=2pt] 
								{(0.25,0) --node [mark size=3pt,line width=0.1pt]{$\pgfuseplotmark{*}$} (0.5,0)}
								node[right]{Lower bound for the rate \cite{tian2015note}};
	\end{scope}

\end{tikzpicture}
\captionsetup{justification=centering}
\caption{Rate memory tradeoff for the $(3,3)$ cache network.}
\label{fig:memory rate curve for (3,3) cache network}
\end{figure}
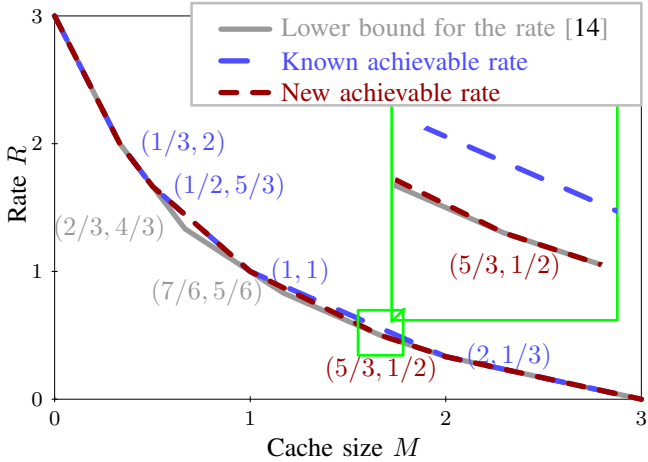

In \cite{tian2015note}, using the computational approach, Tian presented a lower bound for the $(3,3)$ cache network:
\begin{lemma}\label{(3,3) constraint} (Tian\label{3by3}\cite{tian2015note})
\label{3by3 bound}
For the $(3,3)$ cache network  achievable memory rate pairs $(M,R)$ must satisfy
\begin{IEEEeqnarray*}{c}
3M +6 R \geq 8.\label{c.5}
\end{IEEEeqnarray*}
\end{lemma}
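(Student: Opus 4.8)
The plan is to establish this converse by combining the Shannon-type entropy inequalities forced by the problem constraints with a symmetry reduction. First I would set up notation in the usual way: take $W_1,W_2,W_3$ to be mutually independent and uniform on $\{0,1\}^F$, so $H(W_1,W_2,W_3)=3F$; write $Z_k$ for the cache of user $k$, so $H(Z_k)\le MF$; and, since all demands are distinct, index a demand by a bijection $\mathbf d$ of $\{1,2,3\}$ and write $X_{\mathbf d}$ for the corresponding broadcast, so $H(X_{\mathbf d})\le RF$. Decodability then reads $H(W_{d_k}\mid Z_k,X_{\mathbf d})=0$ for every user $k$ and demand $\mathbf d$; consequently, whenever a set $S$ of caches and broadcasts contains, for each $n\in\{1,2,3\}$, some pair $(Z_k,X_{\mathbf d})$ with $d_k=n$, we obtain the ``recovery bound'' $H(S)\ge 3F$, and similarly $H(S)\ge 2F$ or $H(S)\ge F$ when $S$ only suffices to recover two files or one file. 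These recovery bounds, the size bounds $H(Z_k)\le MF$ and $H(X_{\mathbf d})\le RF$, and the elementary Shannon inequalities (chain rule, subadditivity, and submodularity $H(A\cup B)+H(A\cap B)\le H(A)+H(B)$) are the only raw material.

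Next I would pass to a symmetric scheme. Averaging an arbitrary scheme over all $3!$ permutations of the files and all $3!$ permutations of the users produces, at the same $M$ and $R$, a scheme invariant under both symmetry groups; for it, all $H(Z_k)$ coincide, all $H(X_{\mathbf d})$ coincide, and more generally the joint entropy of a set of caches and broadcasts depends only on its orbit under the symmetry group. This collapses the number of distinct entropy variables to a short list and is what makes the final step manageable.

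The heart of the argument is then to exhibit an explicit nonnegative linear combination of the recovery bounds, the submodularity inequalities, and the size bounds that simplifies to $3MF+6RF\ge 8F$ --- equivalently, a dual certificate for the linear program minimizing $3M+6R$ over the Shannon outer bound of the symmetrized $(3,3)$ problem. I would search for a small family of recovery sets $S_1,\dots,S_m$ (each a union of a few caches and broadcasts) together with weights so that, after rewriting the $H(S_j)$ via submodularity in terms of entropies of unions and intersections and then invoking the recovery bounds, each $H(Z_k)$ carries total coefficient at most $3$, the broadcast terms carry total coefficient at most $6$, and the right-hand side sums to $8F$. I expect pinning down this combination to be the main obstacle, and for a concrete reason: using the recovery bounds with \emph{subadditivity alone} (i.e. bounding each $H(S_j)$ by a sum of single-variable entropies) yields at best $3M+6R\ge 27/4$, which is strictly weaker than the claimed $3M+6R\ge 8$ --- so the proof must genuinely exploit submodularity, tracking the entropies of unions and intersections of recovery sets, and it is precisely the choice of those union/intersection terms that must supply the missing $5F/4$. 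This is presumably why Tian obtained the bound by a computational (linear-programming) search; in practice I would reconstruct the certificate from that computation, or simply verify feasibility of the (now small) post-symmetrization LP.
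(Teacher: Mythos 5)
Your proposal correctly sets up the entropy framework and correctly observes that subadditivity alone cannot give $3M+6R\ge 8$, so some form of submodularity must enter; but as written it is a plan for a proof, not a proof. The dual certificate---the specific weighted combination of decodability and submodularity inequalities that yields the bound---is never exhibited: you explicitly defer it to ``reconstruct the certificate from that computation'' or ``verify feasibility of the post-symmetrization LP.'' That certificate is the entire content of the lemma, so the hard step is missing, and the symmetrization reduction you invoke, while legitimate, only shrinks the search space rather than supplying the answer.

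The paper establishes the $(3,3)$ bound by a genuinely different, constructive route: it is the $K=3$ instance of the general analytical inequality $KM+K(K-1)R\ge K^2-1$ proved in Theorem~\ref{neqk}, with no symmetrization and no LP search. The proof restricts to the $K$ cyclic distinct-file demands $\textbf{d}^i=(W_i,W_{i+1},\dots,W_K,W_1,\dots,W_{i-1})$ with broadcasts $X^i$, and exploits the combinatorial fact that user $K+1-i$ requests $W_K$ in demand $\textbf{d}^i$ while requesting every one of $W_1,\dots,W_{K-1}$ across the remaining $K-1$ demands; this yields the decodability identities (\ref{requestall}) and (\ref{request}). Combining those identities with one submodularity step gives the recursive Lemma~\ref{reduction},
\begin{IEEEeqnarray*}{c}
H(W_{[K-1]},X^{[K]\setminus S}) + H(W_{[K-1]},Z_{K+1-i},X^{[K]\setminus \{i\}}) \\
\ge\ H(W_{[K-1]},X^{[K]\setminus (S\cup\{i\})}) + H(W_{[K]}),
\end{IEEEeqnarray*}
and Theorem~\ref{neqk} starts from $KM+K(K-1)R\ge \sum_i H(Z_{K+1-i})+\sum_i H(X^{[K]\setminus\{i\}})$, pairs up cache and broadcast terms by submodularity, rewrites via decodability, and then iterates Lemma~\ref{reduction} to peel off $(K-1)$ copies of $H(W_{[K]})$ plus one $H(W_{[K-1]})$, arriving at $(K-1)K+(K-1)=K^2-1$; for $K=3$ this is exactly $3M+6R\ge 8$. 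Relative to your approach, the paper's argument produces the certificate explicitly as a short labelled chain of Shannon-type steps, generalizes to all $K$ in one stroke, and confirms your intuition about where the extra slack beyond subadditivity comes from: submodularity is applied repeatedly, both when first grouping caches with broadcasts and again inside each invocation of Lemma~\ref{reduction}.
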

For the $(3,3)$ cache network we can achieve the memory rate pair $(5/3, 1/2)$ by using the scheme presented above and the memory rate pair $(2,1/3)$ by using the scheme presented in \cite{maddah2014fundamental}. By memory sharing these schemes we can achieve memory rate pairs satisfying, 
\begin{equation}
R(M)=\frac{4}{3}-\frac{1}{2}M,
\end{equation}
where $M\in [5/3,2]$. From lemma \ref{3by3}, we have,
\begin{IEEEeqnarray}{rll}
R\geq&\  \frac{8}{6}-\frac{3}{6}M=\ R(M). 
\end{IEEEeqnarray}
This completes the characterization of the exact rate memory tradeoff when $M\in[5/3,2]$, for demands where users request different files. The best achievable memory rate pairs and the best lower bounds for the $(3,3)$ cache network is shown in Fig. \ref{fig:memory rate curve for (3,3) cache network}. An extension to the scheme described in this section is presented next.
\section{A New Coding Scheme}
Consider the $(K,K)$ cache network, where each user has a cache memory of size $M=(K-1-1/K)$. In this network, the server has $K$ files $\{W_{1}\dots,W_{K}\}$, each of size $F$ bits and $K$ users are connected to the server through a common shared link. In this section, we propose an extension to the scheme presented in section II. For brevity, we use the notation $[k]$ to represent the set $\{1,2,\dots,k\}$ and $W_{[k]}$ to represent the set $\{W_{1},\dots,W_{k}\}$.

During the placement phase, each file is split into $K(K-1)$ subfiles, each of size $F/(K(K-1))$ bits. Let the subfiles of the file $W_{n}$ be denoted by $W_{n,\{i,j\}}^{i}$ and $W_{n,\{i,j\}}^{j}$, for all $i,j\in[K]$ and $i\neq j$. The placement phase can be split into two stages. In first stage of the placement phase, the server places the subfiles $W_{n,\{i,j\}}^{i}$ and $W_{n,\{i,j\}}^{j}$, for all $n\in [K]$ into all users cache except the users $i$ and $j$. After the first stage of the placement phase, the user $k$'s cache contents are:
\begin{equation*}
\bigcup_{n\in [K]} \ \bigcup_{i,j\in{K\setminus\{k\}}, i\neq j}\Big( W_{n,\{i,j\}}^{i},W_{n,\{i,j\}}^{j}\Big),
\end{equation*}
Thus, in this stage of the placement phase, from each file the server places $2\times\binom{K-1}{2}$ subfiles are placed in each user's cache. The total space occupied by these subfiles in each users cache is $2\times K \times \frac{F}{K(K-1)}= F(K-1)\text{ bits.}$

In this paper we use the notation $W_{n,\{0,1\}}^{1}$ also to represent the subfile $W_{n,\{1,K\}}^{1}$. In the second stage of the placement phase, the server places functions of the subfiles in each user's cache. For the user $k$, the server places the functions $W_{n,\{k-1,k\}}^{k}\oplus W_{n,\{j,k\}}^{k}$, for all $n\in [K]$ and $j\in [K]\setminus\{k-1,k\}$. For each file there exists $(K-2)$ such functions, each of size $F/(K(K-1))$ bits requiring a space of $K(K-2)\times\frac{F}{K(K-1)}=F(\frac{K-2}{K-1}) \text{ bits}$ in each user's cache. In  this stage, the server also places the function $\oplus_{n\in [K]}W_{n,\{k-1,k\}}^{k}$, of size $F/(K(K-1))$ bits, into the user $k$'s cache. At the completion of the second stage of the placement phase, the user $k$'s cache contains the following functions as well:
\begin{IEEEeqnarray*}{ccc}
\bigcup_{n\in [K]}\bigcup_{j\in[K]\setminus\{k-1,k\}}\bigg(W_{n,\{k-1,k\}}^{k}\oplus W_{n,\{j,k\}}^{k}\bigg), \\  \text{ and }
\bigoplus_{n\in [K]}W_{n,\{k-1,k\}}^{k}.
\end{IEEEeqnarray*}

The total space occupied in each user's cache can be computed as,
\begin{IEEEeqnarray*}{rll}
F(K-2)+F\frac{K-2}{K-1}+\frac{F}{K(K-1)}=&(K-1-\frac{1}{K})F\\=&MF \text{ bits.}
\end{IEEEeqnarray*}

During the delivery phase, each user communicates their demand with the server. Consider a demand $\textbf{d}=\{W_{d_{1}}\dots,W_{d_{K}}\}$, where $W_{d_{k}}$ represents the file requested by the user ${k}$ and each user requests a different file. Consider the set $S_{i}=[K]\setminus\{i\}$, for $i\in[K]$. For each set $S_{i}$ the server broadcasts the packet $\oplus_{k\in S_{i}}W^{i}_{d_{k},\{k,i\}}$. The set of packets broadcasts in the delivery phase is,
\begin{equation*}
X_{\textbf{d}}=\bigcup_{i\in [K]}\bigoplus_{k\in S_{i}}W^{i}_{d_{k},\{k,i\}},
\end{equation*}
Since each packet is of size $F/(K(K-1))$ bits, the load experienced by the shared link is,
\begin{IEEEeqnarray*} {rll}
K\times \frac{F}{K(K-1)}=\frac{F}{K-1}=RF \text{ bits}.
\end{IEEEeqnarray*}
Thus, the scheme operates at the memory rate pair $(M,R)=\big(K-1-\frac{1}{K},\frac{1}{K-1}\big)$.

To understand how each user obtains its requested file, let us consider the user $k$, who has  requested the file $W_{d_{k}}$. This user has the subfiles $W_{d_{k},\{i,j\}}^{i}$ and $W_{d_{k},\{i,j\}}^{j}$, where $i,j\in[K]\setminus\{k\}$. This user needs the subfiles $W_{d_{k},\{k,i\}}^{i}$, and $W_{d_{k},\{k,i\}}^{k}$, where $i\in [K]\setminus\{k\}$, to compute  $W_{d_{k}}$. This user obtains these subfiles in two stages. In the first stage, this user obtains the subfiles $W_{d_{k},\{k,i\}}^{i}$. For a set $S_{i}$ the server broadcasts the packet,
\begin{equation}
\oplus_{k\in S_{i}}W^{i}_{d_{k},\{k,i\}}.
\end{equation}
It can be noted that the subfile $W_{d_{j},\{j,i\}}^{i}$, for all $j\in S_{i}$, are available to the user $k$. By using these subfiles the user ${k}$ obtains the subfile $W^{i}_{d_{k},\{k,i\}}$. This is true for all set $S_{i}$, where $i\in [K]\setminus\{k\}$.

Consider the  set of users $S_{k}=[K]\setminus\{k\}$. Corresponding to this set the server broadcasts the packets $\oplus_{j\in S_{k}}W^{k}_{d_{j},\{k,j\}}$. In the second, the cached contents $W^{k}_{d_{j},\{k-1,k\}}\oplus W^{k}_{d_{j},\{k,j\}}$, for all $j\in [K]\setminus\{k\}$, along with the received packet $W^{k}_{d_{j},\{k-1,k\}}\oplus W^{k}_{d_{j},\{k,j\}}$ can be used to compute the function,

\begin{equation}
\label{stage2eq1}
\oplus_{j\in S_{k}}W^{k}_{d_{j},\{k-1,k\}},
\end{equation}
This function along with the cached content,
\begin{equation*}
 \oplus_{n\in [K]}W_{n,\{k-1,k\}}^{k},
\end{equation*}
can be used to recover the subfile $W_{d_{k},\{k-1,k\}}^{k}$. This subfile along with the cached content,
\begin{equation*}
W^{k}_{d_{k},\{k-1,k\}} \oplus W^{k}_{d_{k},\{k,j\}},
\end{equation*}
can recover the subfile $W^{k}_{d_{k},\{k,j\}}$, for all $j\in[K]\setminus\{K\}$. Thus, user $k$ obtains all the subfiles of its requested file $W_{d_{k}}$. Similarly, other users can also obtain their requested files. Thus, we have,

\begin{theorem}
For the $(K,K)$  cache network  the memory rate pair $(K-1-\frac{1}{K},\frac{1}{K-1})$ is achievable for the demand where all users request for different files.

\end{theorem}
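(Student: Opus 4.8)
The proof of Theorem~1 is an achievability argument: it suffices to exhibit the placement and delivery strategy described above and verify three things — the memory constraint, the broadcast rate, and correct decoding for every user. Parts one and two are bookkeeping, so the plan below concentrates the effort on decoding.

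First I would verify the memory constraint. Each file is split into $K(K-1)$ subfiles of $F/(K(K-1))$ bits. Counting the stage-one objects in user $k$'s cache — the two subfiles $W_{n,\{i,j\}}^{i},W_{n,\{i,j\}}^{j}$ for every file $n\in[K]$ and every pair $\{i,j\}\subseteq[K]\setminus\{k\}$ — gives $2K\binom{K-1}{2}=K(K-1)(K-2)$ subfiles; the stage-two functions number $K(K-2)$ (the shifted XOR pairs) plus $1$ (the all-files XOR). Multiplying the total $K(K-1)(K-2)+K(K-2)+1$ by the subfile size and simplifying over the common denominator $K(K-1)$ gives $\frac{(K-1)(K^2-K-1)}{K(K-1)}F=\big(K-1-\tfrac{1}{K}\big)F$, so $M=K-1-\tfrac1K$. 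The rate is immediate: the delivery phase broadcasts one packet of size $F/(K(K-1))$ bits for each of the $K$ sets $S_i$, $i\in[K]$, hence $F/(K-1)$ bits in total and $R=\tfrac1{K-1}$.

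The core of the proof is decodability, which I would establish for an arbitrary user $k$ and then invoke symmetry. The key structural fact is that, since $N=K$ and all requests are distinct, the demand vector is a permutation of $[K]$, so $\{d_j:j\in[K]\setminus\{k\}\}=[K]\setminus\{d_k\}$. User $k$ already holds every subfile of $W_{d_k}$ whose index set avoids $k$, so it needs only the $2(K-1)$ subfiles $W_{d_k,\{k,i\}}^{i}$ and $W_{d_k,\{k,i\}}^{k}$ for $i\in[K]\setminus\{k\}$. For the "outer" subfiles $W_{d_k,\{k,i\}}^{i}$ with $i\neq k$: in the broadcast $\bigoplus_{j\in S_i}W_{d_j,\{j,i\}}^{i}$ every term with $j\neq k$ has superscript $i\neq k$ and index set $\{j,i\}$ not containing $k$, hence is cached by user $k$, so XORing these out yields $W_{d_k,\{k,i\}}^{i}$. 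For the "inner" subfiles $W_{d_k,\{k,i\}}^{k}$: I would XOR the broadcast $\bigoplus_{j\in S_k}W_{d_j,\{j,k\}}^{k}$ against the $K-2$ cached functions $W_{d_j,\{k-1,k\}}^{k}\oplus W_{d_j,\{j,k\}}^{k}$, $j\in[K]\setminus\{k-1,k\}$, which collapses to $\bigoplus_{j\in S_k}W_{d_j,\{k-1,k\}}^{k}$; by the permutation fact this equals $\bigoplus_{n\in[K]\setminus\{d_k\}}W_{n,\{k-1,k\}}^{k}$, so XORing against the cached $\bigoplus_{n\in[K]}W_{n,\{k-1,k\}}^{k}$ isolates $W_{d_k,\{k-1,k\}}^{k}$, after which the cached functions $W_{d_k,\{k-1,k\}}^{k}\oplus W_{d_k,\{j,k\}}^{k}$ peel off the remaining inner subfiles.

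I expect the main obstacle to be notational discipline in this last chain of cancellations: one must track precisely which XOR terms survive, apply the convention $W_{n,\{0,1\}}^{1}=W_{n,\{1,K\}}^{1}$ consistently so that $\{k-1,k\}$ is well defined when $k=1$, and invoke the permutation property at exactly the step that replaces $\bigoplus_{j\in S_k}$ by $\bigoplus_{n\in[K]}$ less one term — everything else is routine accounting. Once user $k$ is handled, symmetry over $k\in[K]$ completes the decoding argument, and assembling the three verifications establishes that $\big(K-1-\tfrac1K,\tfrac1{K-1}\big)$ is achievable for distinct demands, as claimed.
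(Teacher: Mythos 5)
Your proposal is correct and follows exactly the same achievability argument as the paper: the same two-stage coded placement, the same broadcast packets $\bigoplus_{j\in S_i}W^i_{d_j,\{j,i\}}$, and the same two-step decoding at user $k$ (cancel cached terms to get the outer subfiles, then peel the inner subfiles via the shifted-XOR functions and the all-files XOR). Your write-up is slightly more careful than the paper's about which $j$-terms actually appear in each cancellation and about invoking the permutation property $\{d_j: j\in S_k\}=[K]\setminus\{d_k\}$, but the scheme and the verification are the same.
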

\begin{figure*}[!t]
\hrulefill
\begin{IEEEeqnarray*}{rll}
\overset{}{=}& H(W_{[K-1]},Z_{K},X^{[K]\setminus\{1\}})+H(W_{[K-1]},Z_{K-1},X^{[K]\setminus\{2\}})+\sum\limits_{i=3}^{K} H(W_{[K-1]},Z_{K+1-i},X^{[K]\setminus\{i\}})\\
\overset{(a)}{\geq}& H(W_{[K-1]},Z_{K},Z_{K-1},X^{[K]\setminus\{1\}},X^{[K]\setminus\{2\}})+H(W_{[K-1]},X^{[K]\setminus\{1,2\}})+\sum\limits_{i=3}^{K} H(W_{[K-1]},Z_{K+1-i},X^{[K]\setminus\{i\}})\\
\overset{(c)}{=}& H(W_{[  K-1]},W_{K},Z_{K},Z_{K-1},X^{[K]})+H(W_{[K-1]},X^{[K]\setminus[2]})+\sum\limits_{i=3}^{K} H(W_{[K-1]},Z_{K+1-i},X^{[  K]\setminus\{i\}})\\
\overset{(d)}{=}& H(W_{[K]})+H(W_{[K-1]},X^{[K]\setminus[2]})+H(W_{[K-1]},Z_{K-2},X^{[K]\setminus\{3\}})+\sum\limits_{i=4}^{K} H(W_{[K-1]},Z_{K+1-i},X^{[K]\setminus\{i\}})\\
\overset{(e)}{\geq}& H(W_{[K]})+H(W_{[K]})+H(W_{[K-1]},X^{[K]\setminus[3]})+H(W_{[K-1]},Z_{K-3},X^{[K]\setminus\{4\}}) +\sum\limits_{i=5}^{K} H(W_{[K-1]},Z_{K+1-i},X^{[K]\setminus\{i\}}) \\
\overset{(e)}{\geq}& 2H(W_{[K]})+H(W_{[K]})+H(W_{[K-1]},X^{[K]\setminus[4]})+H(W_{[K-1]},Z_{K-4},X^{[K]\setminus\{5\}}) +\sum\limits_{i=6}^{K} H(W_{[K-1]},Z_{K+1-i},X^{[K]\setminus\{i\}}) 
\end{IEEEeqnarray*}
\hrulefill
\end{figure*}

\section{A New Lower Bound}
In this section we prove the optimality of the scheme presented in the previous section for the $(K,K)$  cache network. We use the following identities.
\begin{IEEEeqnarray}{rll}
H(Z_{k},X_{\textbf{d}})=&H(W_{d_{k}},Z_{k},X_{\textbf{d}}),\label{I.1}\\
H(W_{[K]},Z_{k},X_{\textbf{d}})=&H(W_{[K]}),\label{I.2}
\end{IEEEeqnarray}
where $(\ref{I.1})$ follows from the fact that each user $k$ can recover the requested file, $W_{d_{k}}$, from the received packets, $X_{\textbf{d}}$, with the help of its cached contents, $Z_{k}$, and $(\ref{I.2})$ follows from the fact that each user's cached contents, $Z_{k}$, and the packets broadcast in response to the demand $\textbf{d}$, $X_{\textbf{d}}$, are functions of files in the server, $W_{[K]}$. Consider a demand $\textbf{d}^{i}=$ $\{W_{i},W_{i+1},\dots,W_{K},$ $W_{1},\dots,W_{i-1}\}$, for $i \in [K]$. In response to this demand $\textbf{d}^{i}$, the server broadcast a set of packets $X^{i}$. We use the notation $X^{[k]}$ to represent the set $\{X^{1},\dots, X^{k}\}$. It can be noted that, the user $U_{K+1-i}$ requests for all the files except the file $W_{K}$ in the demands $\textbf{d}^{[k]\setminus\{i\}}$ and  in the demand $\textbf{d}^{i}$, the user $U_{K+1-i}$ requests for the file $W_{K}$. Thus, from (\ref{I.1}) we have,
\begin{IEEEeqnarray}{rll}
\label{requestall}
H(Z_{K+1-i},X^{[K]\setminus\{i\}})=&H(W_{[K-1]},Z_{K+1-i},X^{[K]\setminus\{i\}})\text{,  \ \ \ \ } \\
\label{request}
H(Z_{K+1-i},X^{i})=&H(W_{K},Z_{K+1-i},X^{i}).
\end{IEEEeqnarray}
The following lemma can now be established.
\begin{lemma}
\label{reduction}
For the $(K, K)$  cache network we have the identity:
\begin{IEEEeqnarray*}{rll}
H(W_{[K-1]},X^{[K]\setminus S})+&H(W_{[K-1]},Z_{K+1-i},X^{[K]\setminus \{i\}})\\\geq &H(W_{[K-1]},X^{[K]\setminus S\cup\{i\}})+H(W_{[K]}),
\end{IEEEeqnarray*}
where $S\subset[K]$ and $i\notin S$.
\end{lemma}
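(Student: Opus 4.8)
The plan is to derive the inequality from the submodularity (polymatroidal property) of the entropy function, applied to a carefully chosen pair of random‑variable collections, and then to simplify the resulting ``union'' term using the recoverability identity (\ref{request}).

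Concretely, first I would set $A := (W_{[K-1]},X^{[K]\setminus S})$ and $B := (W_{[K-1]},Z_{K+1-i},X^{[K]\setminus\{i\}})$, viewing each as a subcollection of the fixed pool of random variables $\{W_{1},\dots,W_{K}\}\cup\{Z_{1},\dots,Z_{K}\}\cup\{X^{1},\dots,X^{K}\}$. Submodularity gives $H(A)+H(B)\geq H(A\cup B)+H(A\cap B)$. Since $i\notin S$, the index sets satisfy $([K]\setminus S)\cup([K]\setminus\{i\})=[K]$ and $([K]\setminus S)\cap([K]\setminus\{i\})=[K]\setminus(S\cup\{i\})$, so $A\cup B=(W_{[K-1]},Z_{K+1-i},X^{[K]})$ and $A\cap B=(W_{[K-1]},X^{[K]\setminus(S\cup\{i\})})$. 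This already reproduces the term $H(W_{[K-1]},X^{[K]\setminus S\cup\{i\}})$ on the right‑hand side verbatim, so it remains only to show $H(A\cup B)=H(W_{[K]})$.

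For that I would invoke (\ref{request}): it states $H(Z_{K+1-i},X^{i})=H(W_{K},Z_{K+1-i},X^{i})$, i.e.\ $W_{K}$ is a deterministic function of $(Z_{K+1-i},X^{i})$. Since $X^{i}$ appears inside $X^{[K]}$, adjoining $W_{K}$ to $A\cup B$ does not change its entropy, whence $H(A\cup B)=H(W_{[K]},Z_{K+1-i},X^{[K]})$; and because every cache $Z_{k}$ and every broadcast $X^{j}$ is a function of the files $W_{[K]}$ (the generalized form of (\ref{I.2})), this equals $H(W_{[K]})$. Substituting back into the submodularity bound yields exactly the claimed identity.

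The argument is short, so I do not expect a genuine obstacle; the one place that needs care is the bookkeeping of the index sets in the union and intersection, and in particular the use of the hypothesis $i\notin S$ to force $([K]\setminus S)\cup([K]\setminus\{i\})$ to be all of $[K]$ — this is precisely what makes $X^{i}$, and hence $W_{K}$ via (\ref{request}), available inside $A\cup B$. It is also worth double‑checking that $Z_{K+1-i}$ drops out of $A\cap B$ (it belongs only to $B$), which is why no cache term appears on the right‑hand side.
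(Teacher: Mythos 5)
Your proposal is correct and follows essentially the same route as the paper's own proof: both apply the submodularity inequality to the same pair of collections (you phrase it as $H(A)+H(B)\geq H(A\cap B)+H(A\cup B)$; the paper writes the equivalent chain‑rule form $H(X,Y)+H(X,Z)\geq H(X)+H(X,Y,Z)$ after explicitly splitting $X^{[K]\setminus S}$ and $X^{[K]\setminus\{i\}}$ around the common block $X^{[K]\setminus(S\cup\{i\})}$), and both then collapse the union term via (\ref{request}) followed by (\ref{I.2}). The bookkeeping of $i\notin S$ and the observation that $Z_{K+1-i}$ lies only on one side match the paper's steps exactly.
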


\begin{proof} We have,
\begin{IEEEeqnarray*}{rll}
H(W&_{[K-1]},X^{[K]\setminus S})+H(W_{[ K-1]},Z_{K+1-i},X^{[K]\setminus \{i\}})\\
=& H(W_{[K-1]},X^{[K]\setminus S\cup \{i\}},X^{i})\\&+H(W_{[ K-1]},Z_{K+1-i},X^{[ K]\setminus S\cup \{i\}},X^{S})\\
\overset{(a)}{\geq}& H(W_{[ K-1]},X^{[K]\setminus S\cup \{i\}})+H(W_{[ K-1]},Z_{K+1-i},X^{[K]})\\
\overset{(b)}{=}& H(W_{[K-1]},X^{[K]\setminus S\cup \{i\}})+H(W_{[K]},Z_{K+1-i},X^{[K]})\\
\overset{(c)}{=}& H(W_{[K-1]},X^{[K]\setminus S\cup \{i\}})+H(W_{[K]}),
\end{IEEEeqnarray*}
where $(a)$ follows from submodularity of entropy, $(b)$ follows from (\ref{request}) and $(c)$ follows from (\ref{I.2}).
\end{proof}
Using this lemma repeatedly, we prove the following theorem.
\begin{theorem}
\label{neqk}
For the $(K,K)$  cache network the achievable memory rate pair must satisfy the following constraint.
\begin{equation*}
KM+K(K-1)R\geq K^2-1.
\end{equation*}
\end{theorem}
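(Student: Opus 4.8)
The plan is to recast the desired inequality as a lower bound on a sum of joint entropies and then deflate that sum by invoking Lemma~\ref{reduction} repeatedly. First, note that for each $i\in[K]$ the bundle $X^{[K]\setminus\{i\}}$ consists of $K-1$ broadcasts, so sub-additivity of entropy together with $H(Z_k)\le MF$ and $H(X^j)\le RF$ gives $H(Z_{K+1-i},X^{[K]\setminus\{i\}})\le MF+(K-1)RF$. Summing over $i\in[K]$ and rewriting each term by (\ref{requestall}),
\[
KMF+K(K-1)RF\ \ge\ \sum_{i=1}^{K}H(Z_{K+1-i},X^{[K]\setminus\{i\}})\ =\ \sum_{i=1}^{K}H(W_{[K-1]},Z_{K+1-i},X^{[K]\setminus\{i\}}),
\]
so it suffices to show that the last sum is at least $(K^2-1)F$.

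Lemma~\ref{reduction} consumes one ``bare'' term $H(W_{[K-1]},X^{[K]\setminus S})$ together with one cache-bearing term, but every term in the sum above carries a cache, so I would first manufacture a bare term. Merging the $i=1$ and $i=2$ terms by submodularity of entropy, using $X^{[K]\setminus\{1\}}\cup X^{[K]\setminus\{2\}}=X^{[K]}$ and $X^{[K]\setminus\{1\}}\cap X^{[K]\setminus\{2\}}=X^{[K]\setminus[2]}$, lower-bounds their sum by $H(W_{[K-1]},Z_{K},Z_{K-1},X^{[K]})+H(W_{[K-1]},X^{[K]\setminus[2]})$. The first summand equals $H(W_{[K]})$: by (\ref{request}) with $i=2$ the file $W_K$ is a function of $(Z_{K-1},X^{2})$, hence of the larger collection, and then (\ref{I.2}) collapses it. After this step the sum is at least $H(W_{[K]})+H(W_{[K-1]},X^{[K]\setminus[2]})+\sum_{i=3}^{K}H(W_{[K-1]},Z_{K+1-i},X^{[K]\setminus\{i\}})$.

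Now I would iterate Lemma~\ref{reduction}: for $j=2,3,\dots,K-1$ in turn, apply it with $S=[j]$ and index $i=j+1$, so that the running bare term $H(W_{[K-1]},X^{[K]\setminus[j]})$ and the $i=j+1$ term of the residual sum combine into one further copy of $H(W_{[K]})$ plus the new bare term $H(W_{[K-1]},X^{[K]\setminus[j+1]})$, using $[j]\cup\{j+1\}=[j+1]$. Each step removes one index from the residual sum and produces one $H(W_{[K]})$; after the step $j=K-1$ the residual sum is empty and $X^{[K]\setminus[K]}$ is void, leaving $(K-1)H(W_{[K]})+H(W_{[K-1]})$. Since the $K$ files are mutually independent with $H(W_n)=F$, this is $(K-1)KF+(K-1)F=(K^2-1)F$, and dividing the chain through by $F$ yields $KM+K(K-1)R\ge K^2-1$.

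The individual entropy steps are all packaged in Lemma~\ref{reduction} and in identities (\ref{I.1})--(\ref{request}), so the substance of the argument is the bookkeeping of the iteration. The point that needs care is to verify that the cache index $K+1-i$ surviving in the residual sum is precisely the index Lemma~\ref{reduction} expects when $i=j+1$, and that merging that term with the running bare term reproduces $X^{[K]\setminus[j+1]}$ and not a strictly larger set --- both follow at once from $[j]\cup\{j+1\}=[j+1]$. With that alignment checked, the computation is exactly the displayed chain of (in)equalities, unrolled for the first few steps of the induction.
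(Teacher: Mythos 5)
Your proposal is correct and follows essentially the same route as the paper: bound $KM+K(K-1)R$ from below by $\sum_{i=1}^K H(W_{[K-1]},Z_{K+1-i},X^{[K]\setminus\{i\}})$ via subadditivity and (\ref{requestall}), manufacture the first bare term $H(W_{[K-1]},X^{[K]\setminus[2]})$ and one copy of $H(W_{[K]})$ by merging the $i=1,2$ terms via submodularity together with (\ref{request}) and (\ref{I.2}), and then roll up the remaining $K-2$ cache-bearing terms with Lemma~\ref{reduction}, taking $S=[j]$ and $i=j+1$, ending at $(K-1)H(W_{[K]})+H(W_{[K-1]})$. The bookkeeping you flag as the delicate point (alignment of the index $K+1-i$ and the identity $[j]\cup\{j+1\}=[j+1]$) is indeed exactly what the paper's displayed chain unwinds, so the two arguments coincide.
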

\begin{proof}
We have,
\begin{IEEEeqnarray*}{rll}
KM+K(K-&1)R \geq \sum\limits_{i=1}^{K}H(Z_{K+1-i})+\sum\limits_{i=1}^{K}H(X^{[K]\setminus\{i\}})\\
\overset{(a)}{\geq}&\sum\limits_{i=1}^{K} H(Z_{K+1-i},X^{[K]\setminus\{i\}})\\
\overset{(b)}{=}&\sum\limits_{i=1}^{K} H(W_{[K-1]},Z_{K+1-i},X^{[K]\setminus\{i\}})\\
=&(\text{See top of next page}) \\
\overset{(f)}{\geq}& (K-2)H(W_{[K]})+H(W_{[K-1]},X^{[K]\setminus[K-1]}) \\&+H(W_{[K-1]},Z_{1},X^{[K]\setminus\{K\}}) \\
\overset{(e)}{\geq}& (K-1)H(W_{[K]})+H(W_{[K-1]}) \\
\overset{}{=}& (K-1)K+K-1=K^{2}-1,
\end{IEEEeqnarray*}
where $(a)$ follows from submodularity of entropy, $(b)$  follows from (\ref{requestall}), $(c)$ follows from (\ref{request}), $(d)$ follows from (\ref{I.2}), $(e)$ follows from lemma \ref{reduction} and $(f)$ follows from applying lemma \ref{reduction} for $K-5$ times.
\end{proof}

For the $(K,K)$  cache network, the proposed scheme achieves the memory rate pair $(K-1-1/K,1/K-1)$ and the scheme presented in \cite{maddah2014fundamental} achieves the memory rate pair $((K-1),1/K)$. By memory sharing these schemes, the rate achieved when $M\in [K-1-\frac{1}{K},K-1]$ is,
\begin{equation*}
R(M)=\frac{K^{2}-1}{K(K-1)}-\frac{M}{K-1}.
\end{equation*}
From theorem \ref{neqk}, we have,
\begin{align*}
R\geq \frac{K^{2}-1}{K(K-1)}-\frac{M}{K-1}=\ R(M).
\end{align*}
This completes the characterization of the exact rate memory tradeoff for cache size  $M\in[\frac{K^{2}-K-1}{K},K-1]$. The results of sections III and IV are summarised in the following theorem:
\begin{theorem}
For the $(K,K)$ cache network, when $M\in [\frac{K^{2}-K-1}{K},K-1]$, the exact rate memory treadeoff is characterised by,
\begin{equation*}
R(M)=\frac{K+1}{K}-\frac{M}{K-1}.
\end{equation*}
\end{theorem}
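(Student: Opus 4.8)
The plan is to obtain the theorem by \emph{matching} an achievable upper bound with the converse bound of Theorem~\ref{neqk} on the entire interval $M\in[\frac{K^2-K-1}{K},K-1]$. First I would observe that $\frac{K^2-K-1}{K}=K-1-\frac1K$, so this interval has the pair $(K-1-\frac1K,\frac1{K-1})$ of Theorem~1 as its left endpoint and the pair $(K-1,\frac1K)$ of the Maddah--Ali--Niesen scheme as its right endpoint. The latter pair follows by substituting $N=K$ and $r=K-1$ into their tradeoff $\big(\frac{Nr}{K},\frac{K-r}{1+r}\big)$, which gives $M=\frac{K(K-1)}{K}=K-1$ and $R=\frac{K-(K-1)}{1+(K-1)}=\frac1K$. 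Both of these corner pairs are achievable for the demand in which all users request distinct files.

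For achievability, I would memory-share between these two corner points: any $M$ in the interval is a convex combination of $K-1-\frac1K$ and $K-1$, and the corresponding time-shared scheme achieves the rate on the line segment joining $(K-1-\frac1K,\frac1{K-1})$ and $(K-1,\frac1K)$. A short computation identifies this segment: its slope is $\frac{\frac1K-\frac1{K-1}}{\frac1K}=-\frac1{K-1}$ and it passes through $(K-1,\frac1K)$, hence it is $R(M)=\frac1K-\frac{M-(K-1)}{K-1}=\frac{K+1}{K}-\frac{M}{K-1}$; evaluating at the left endpoint recovers $\frac1K+\frac1{K(K-1)}=\frac1{K-1}$, as it must. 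Thus $R(M)=\frac{K+1}{K}-\frac{M}{K-1}$ is achievable for every $M$ in the stated range.

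For the converse, I would simply rearrange Theorem~\ref{neqk}: the inequality $KM+K(K-1)R\ge K^2-1$ yields $R\ge\frac{K^2-1}{K(K-1)}-\frac{M}{K-1}$, and since $\frac{K^2-1}{K(K-1)}=\frac{(K-1)(K+1)}{K(K-1)}=\frac{K+1}{K}$, this is exactly $R\ge R(M)$. Combining the two directions, $R(M)$ is simultaneously achievable and a valid lower bound for all $M$ in the interval, which establishes the claimed exact rate–memory tradeoff.

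I expect essentially no obstacle at this stage: all the substantive work is already carried out in Theorem~1 (the new scheme) and Theorem~\ref{neqk} (the new lower bound). The only points requiring care are the routine algebraic reconciliations — that $\frac{K^2-K-1}{K}$ and $K-1-\frac1K$ name the same endpoint, that the memory-sharing line through the two achievable corners collapses to $\frac{K+1}{K}-\frac{M}{K-1}$, and that the converse constant $\frac{K^2-1}{K(K-1)}$ equals $\frac{K+1}{K}$ — all of which align exactly, so the upper and lower bounds coincide throughout the interval.
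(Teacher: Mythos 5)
Your proposal is correct and follows essentially the same route as the paper: memory-share between the new scheme's corner point $\bigl(K-1-\tfrac1K,\tfrac1{K-1}\bigr)$ and the Maddah--Ali--Niesen point $\bigl(K-1,\tfrac1K\bigr)$ to get the upper bound, then rearrange Theorem~\ref{neqk} and simplify $\tfrac{K^2-1}{K(K-1)}=\tfrac{K+1}{K}$ to match. No gap.
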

\section{Conclusions}
In this paper we considered the $(K,K)$ cache network and proposed a new coding scheme to achieve the memory rate pair $(\frac{K^{2}-K-1}{K},\frac{1}{K-1})$, when all users are requesting for different files. Further, we proved the optimality of the proposed scheme. We are currently pursuing the problems of extending this scheme to the general $(N,K)$ cache network where each user has a cache memory of size $M\in[0,N]$ and obtain a matching lower bound for this network.

\ifCLASSOPTIONcaptionsoff
  \newpage
\fi

\end{document}